\newtheorem{lem}{Lemma}
\definecolor{light-gray}{gray}{0.8}
\def\nbb{{\mathbf{b}}}
\def\nbr{{\mathbf{r}}}
\def\nbx{{\mathbf{x}}}
\def\nb0{{\mathbf{0}}}
\def\nb1{{\mathbf{1}}}
\def\ncalA{{\mathcal{A}}}
\def\ncalB{{\mathcal{B}}}
\def\ncalC{{\mathcal{C}}}
\def\ncalG{{\mathcal{G}}}
\def\nbbP{{\mathbb{P}}}
\def\nrmd{{\rm d}}
\newtheorem{theorem}{Theorem}
\def\E{\mathbb{E}}
\def\pc{\mathtt{P_c}}
\def\R{\mathbb{R}}
\def\sir{\mathtt{SIR}}
\begin{document}
\title{Serving Distance and Coverage in a Closed Access PHP-Based Heterogeneous Cellular Network}
\author{\IEEEauthorblockN{Zeinab Yazdanshenasan, Harpreet S. Dhillon, and Peter Han Joo Chong
}
\thanks{H. S.~Dhillon  is with  Wireless@VT, Department of ECE, Virginia Tech, Blacksburg, VA, USA. Email: hdhillon@vt.edu. Z.~Yazdanshenasan and P. H. J.~Chong are with School of EEE, Nanyang Technological University (NTU), Singapore. Email: \{zeinab001, Ehjchong\}@e.ntu.edu.sg. The support of the US NSF (Grant CCF-1464293) is gratefully acknowledged.}
}
\maketitle
\thispagestyle{empty}
\pagestyle{empty}
\begin{abstract}
Heterogeneous cellular networks (HCNs) usually exhibit spatial separation amongst base stations (BSs) of different types (termed {\em tiers} in this paper). For instance, operators will usually not deploy a picocell in close proximity to a macrocell, thus inducing separation amongst the locations of pico and macrocells. This separation has recently been captured by modeling the small cell locations by a Poisson Hole Process (PHP) with the hole centers being the locations of the macrocells. Due to the presence of exclusion zones, the analysis of the resulting model is significantly more complex compared to the more popular Poisson Point Process (PPP) based models. In this paper, we derive a tight bound on the distribution of the distance of a typical user to the closest point of a PHP. Since the exact distribution of this distance is not known, it is often approximated in the literature. For this model, we then provide tight characterization of the downlink coverage probability for a typical user in a two-tier closed-access HCN under two cases: (i) typical user is served by the closest macrocell, and (ii) typical user is served by its closest small cell. The proposed approach can be extended to analyze other relevant cases of interest, e.g., coverage in a PHP-based open access HCN.%
\end{abstract}
\begin{IEEEkeywords}
Stochastic geometry, interference modeling,  coverage probability, Poisson Hole Process.
\end{IEEEkeywords}


\section{Introduction}
Stochastic geometry has emerged as a powerful tool for the modeling and analysis of HCNs~\cite{dhillon2012modeling,mukherjee2012sinr}.
The most popular approach is to model the BS locations of an HCN as a superposition of independent PPPs. While this lends tractability to the analysis of key performance metrics, the independent PPP-based model does not capture inherent spatial separation that exists amongst the locations of the BSs belonging to different tiers. In order to capture this inter-tier separation in a two-tier HCN, \cite{deng2014heterogeneous,dengheterogeneousax} have proposed a new model in which the macro BS locations are modeled by a PPP and the small cell BS locations are modeled around them using a PHP. The PHP model essentially places an exclusion zone of a given radius around each macro BS where small cells cannot lie. 
 
The presence of holes in the interference field makes it difficult to analyze PHP-based models. This has led to several approximations, including approximating a PHP by its baseline PPP, approximating it by a PPP with the same density as the PHP, and approximating it by a Poisson Cluster Process~\cite{ganti2006regularity,deng2014heterogeneous,dengheterogeneousax,Hybrid_ICC,lee2012interference}. In context of the HCN analysis, this means that the resulting approximate expressions for key metrics, such as coverage probability, are accurate only for a limited range of system parameters. Furthermore, the serving distance distribution in the presence of holes is not easy to characterize. As a result, prior works on the analysis of PHP-based HCNs either assumed the serving distances to be fixed or used curve-fitting to {\em fit} their distributions to a Weibull distribution~\cite{dengheterogeneousax}. More recently, we developed new tractable tools for the shot-noise analysis of an interference field modeled as a PHP by carefully {\em preserving the local-neighborhood} around the receiver location in order to maintain accuracy while approximating the far-field to lend tractability~\cite{yazdanshenasan2016poisson}. Using the same general methodology, we develop new tools for the analysis of PHP-based HCNs in this paper. Key contributions are summarized next.

{\em Contributions and outcomes.} We consider a two-tier closed-access PHP-based HCN model, where the locations of the macro BSs are modeled by a PPP and those of small cell BSs by a PHP. This naturally captures the inter-tier spatial separation. To enable the downlink analysis of this model, we first derive a bound on the distance of a typical user to the closest point of a PHP using general approach of {\em preserving the  local neighborhood} proposed by the authors in~\cite{yazdanshenasan2016poisson}. Numerical comparisons show that the bound is remarkably tight across wide range of system parameters. This gives us the serving distance distribution when a typical user connects to the small cell tier. We then derive several new bounds and approximations for the coverage probability for the closed access case where the typical user is authorized to connect to one of the two tiers (either macro or small cells).

          \begin{figure}[t!]
  \centering{
              \includegraphics[width=.9\linewidth]{./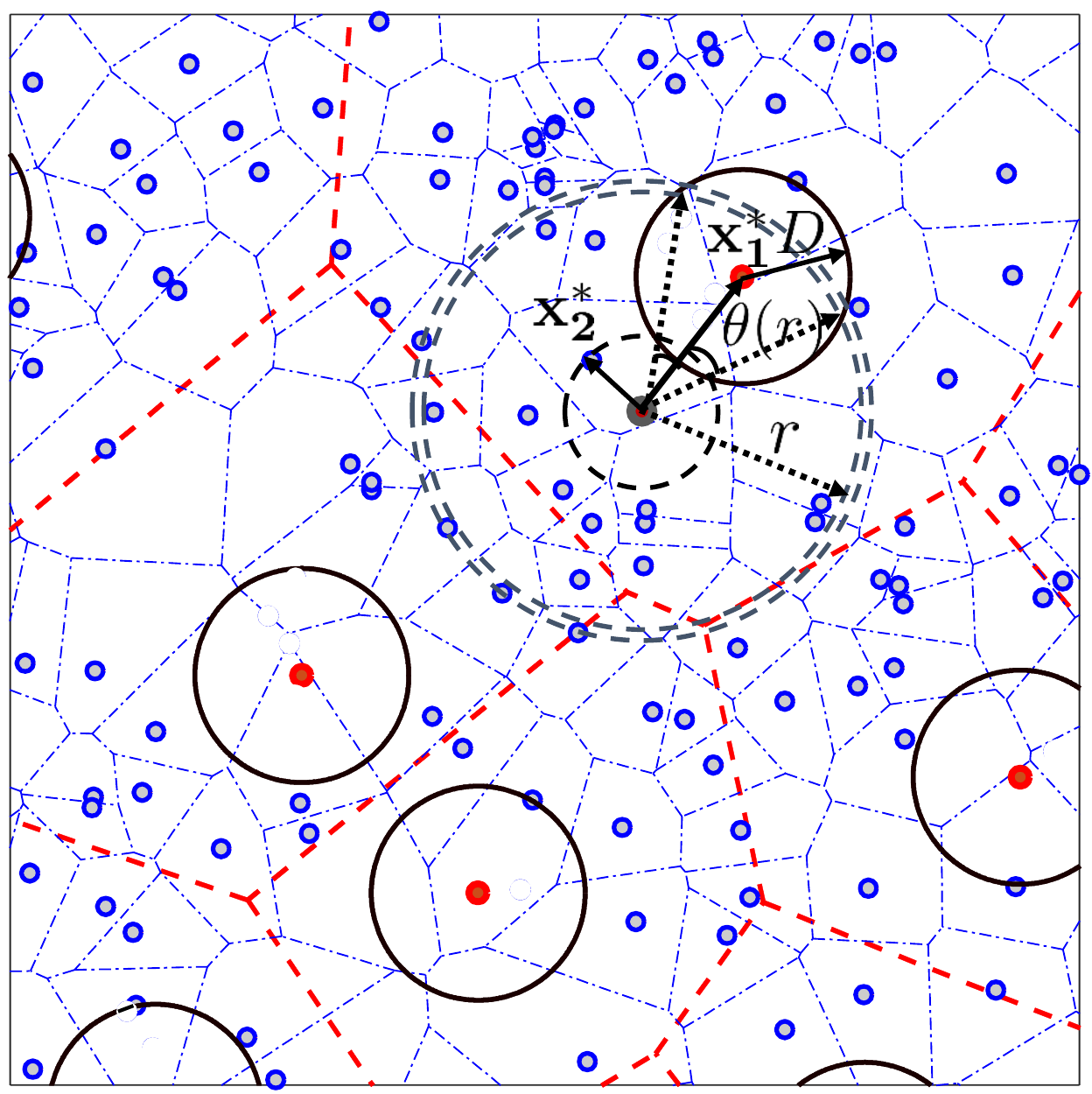}
              \caption{Illustration of the PHP-based two-tier HCN system model.}
\label{Serv_dis00}}
          \end{figure} 
\section{System model}
\label{Systmodel}
We consider a two-tier HCN consisting of macrocells and small cells. The macro BS locations are modeled as a homogenous PPP $\{{\bf x}_1\} \equiv \Phi_1$ with density $\lambda_1$ BSs per meter$^2$. As proposed in~\cite{deng2014heterogeneous,dengheterogeneousax}, we capture the spatial separation among the small cell and macro BSs by modeling the small cell locations as a PHP, which can be defined formally as: 
\begin{align}\label{PHPdef1}
\Phi_2 =\{\nbx_2\in \Omega : \nbx_2\notin \Xi_D\}= \Omega \setminus \Xi_D,
\end{align}
where  $\Omega$ is a baseline PPP of density $\lambda_2$ and $ \Xi_D \triangleq  \bigcup_{\nbx_1 \in \Phi_1} \mathbf{b}(\nbx_1,D)$ with $\mathbf{b}(\nbx_1,D)$ being a ball of radius $D$ centered at $\nbx_1$. In other words, $\Phi_2$ is generated by carving out holes of fixed radius $D$ centered around the points of $\Phi_1$ from $\Omega$. This means the minimum separation between a macro and small cell is $D$. This setup is illustrated in Fig.~\ref{Serv_dis00}.
 Please refer to~\cite{yazdanshenasan2016poisson} for a more detailed discussion about PHP. 

For simplicity of exposition, we consider macro and small cell users separately. We assume that each macro user is served by its closest macrocell and each small cell user is served by its closest small cell. While this case may appear to be restrictive, its analysis entails significant complexity due to the presence of {\em holes} in $\Phi_2$. The tools developed in this paper to handle this complexity also provide first few concrete steps towards the analysis of open access case, which is significantly more complex due to the consideration of cell selection jointly across the two tiers. The results for the open access case will be provided in the extended journal version of this paper. 

The user locations are modeled by an independent stationary point process. The downlink analysis is performed at a typical user located at the origin under two cases: (i) it is served by its closest small cell, and (ii) it is served by its closest macrocell. Denoting the distance from the typical user to  its closest BS from the $k^{th}$ tier by $\|{\bf x}_{k}^*\| = Z_{k}$ and transmit power of the $k^{th}$ tier BSs by $P_k$,  the received power at the typical user is:  $P_{r,k} = P_k h_{\nbx_k^*} {Z_{k}}^{-\alpha}$, where 
$h_{\nbx_k^*}\sim \exp(1)$ models Rayleigh fading and $\alpha>2$ is path-loss exponent. The  signal to interference ratio ($\sir$) at the typical user served by a BS of the $k^{th}$ tier is: 
\begin{align*}
\sir(\|{\bf x}_{k}^*\|)\!=\!\frac{P_k h_{\nbx_k^*} {\|{\bf x}_{k}^*\| }^{-\alpha} }{\sum_{j=1}^2\sum_{\substack{\nbx_j \in {\Phi_{j}\setminus\{\nbx_{k}^*\}}}}P_{j} h_{\nbx} \|\nbx_j\|^{-\alpha}}\!,
\end{align*}
where $k\in\{1,2\}$.
The thermal noise is assumed to be negligible compared to the interference and is hence ignored. 
\section{Coverage Probability}
This is the main technical section of the paper where we first derive the distribution of the distance from the typical user to its closest macro and small cell BSs (denoted by $Z_1$ and $Z_2$). Using these distributions, we will derive expressions for the coverage probability of the typical user in the two cases. 
\label{SIRCoverPrb}
\subsubsection{Distributions of serving distances $Z_1$ and $Z_2$} 
Recall that $Z_1$ denotes the distance of the typical user (placed at the origin) to its closest macrocell (closest point of $\Phi_1$). The distribution of $Z_1$ is well-known and can be derived using the null probability of a PPP~\cite{haenggi2005distances}. For this case, it comes out to be%
\begin{align}	\label{Eq: clds00}
f_{Z_1}(z_1)=2\pi\lambda_1 z_1\exp(-\pi\lambda_1 {z_1}^2), \quad z_1 > 0.
\end{align}
Also recall that $Z_2$ denotes the distance of the typical user to its closest small cell. Its distribution is much more complex to characterize. In order to curtail the complexity, we consider only the closest hole to the typical user in this derivation and ignore all other holes. In other words, for the derivation of this distance distribution, PHP $\Phi_2$ is approximated by $\Omega$ with only one (closest) hole carved out. This approach clearly underestimates $Z_2$ and the resulting random variable is denoted by $\hat{Z_2}$.  
The distribution of distance $\hat{Z_2}$ is given by Lemma~\ref{Lemma1_Ass2}. 
The proof is provided in Appendix~\ref{app: A}.
\begin{figure*}[!t]
\begin{lem}
\label{Lemma1_Ass2} The probability density function (PDF) of distance $\hat{Z_2}$ is given by
\begin{align}	\label{Eq: FRcrofton}
f_{\hat{Z_2}|Z_1\leq D}(\hat{z_2})= \left\{\begin{matrix}
(2\pi\lambda_2 {\hat{z_2}}-\lambda_2\frac{\nrmd A_\mathrm{{ins}}({\hat{z_2}},Z_1)}{\nrmd \hat{z_2}})\exp(-\lambda_2 (\pi{\hat{z_2}}^2-A_\mathrm{{ins}}({\hat{z_2}},Z_1))) & D-Z_1 < {\hat{z_2}}\leq Z_1+D\\ 
2\pi\lambda_2 {\hat{z_2}}\exp(-\pi\lambda_2 ({\hat{z_2}}^2-D^2)) & {\hat{z_2}}> Z_1+D
\end{matrix}\right.
\end{align} and
\begin{align}	\label{Eq: FRcrofton2}
f_{\hat{Z_2}|Z_1> D}(\hat{z_2})= \left\{\begin{matrix}
2\pi\lambda_2 {\hat{z_2}}\exp(-\pi\lambda_2 {\hat{z_2}}^2) & {\hat{z_2}}\leq Z_1-D\\ 
(2\pi\lambda_2 {\hat{z_2}}-\lambda_2\frac{\nrmd A_\mathrm{{ins}}({\hat{z_2}},Z_1)}{\nrmd \hat{z_2}})\exp(-\lambda_2 (\pi{\hat{z_2}}^2-A_\mathrm{{ins}}({\hat{z_2}},Z_1))) & Z_1-D < {\hat{z_2}}\leq Z_1+D\\ 
2\pi\lambda_2 {\hat{z_2}}\exp(-\pi\lambda_2 ({\hat{z_2}}^2-D^2)) & {\hat{z_2}}> Z_1+D
\end{matrix}\right.
\end{align}
where $\hat{z_2}$ and $Z_1$ denote the distance from the typical user to its serving small cell BS and its distance to the closest macro BS,
 respectively. Further $A_\mathrm{{ins}}({\hat{z_2}},Z_1)=\ncalB(D,\hat{z_2},Z_1)+\ncalB(\hat{z_2},D,Z_1)-0.5\ncalA(D,\hat{z_2},Z_1)$ denotes the area of intersection between two circles $\nbb(0,\hat{z_2})$ and $\nbb(Z_1,D)$.
Other functions appearing in the above expression are as follows: $\ncalA(\kappa,\zeta,\eta)=\sqrt{\left(\kappa^2-(\eta-{\zeta})^2\right) \left(({\zeta}+\eta)^2-\kappa^2\right)}$, $\ncalB(\kappa,\zeta,\eta)=\kappa^2 \arccos \left(\frac{\kappa^2-{\zeta}^2+\eta^2}{2 \kappa \eta}\right)$ and $\ncalC(\kappa,\zeta,\eta)=\frac{\kappa^2-\zeta^2+\eta^2}{2 \kappa^2 \eta}$.
\end{lem}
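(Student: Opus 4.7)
The plan is to condition on the distance $Z_1$ to the closest macro BS---which by construction is also the center of the single retained hole---and then to compute the complementary CDF of $\hat{Z_2}$ from the void probability of the baseline PPP $\Omega$. Since only this one hole is kept in the definition of $\hat{Z_2}$, the admissible region for points of $\Omega$ is $\mathbb{R}^2 \setminus \mathbf{b}(\nbx_1^*, D)$, where $\nbx_1^*$ is the closest macro BS. Conditional on $Z_1$, the event $\{\hat{Z_2} > \hat{z_2}\}$ is thus equivalent to the absence of any point of $\Omega$ in the set $\mathbf{b}(0,\hat{z_2}) \setminus \mathbf{b}(\nbx_1^*, D)$, which by the PPP void-probability formula has probability $\exp(-\lambda_2 |\mathbf{b}(0,\hat{z_2}) \setminus \mathbf{b}(\nbx_1^*, D)|)$.

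Next, I would evaluate $|\mathbf{b}(0,\hat{z_2}) \setminus \mathbf{b}(\nbx_1^*, D)|$ by splitting into sub-cases according to the geometric configuration of the two balls. If the user lies inside the hole ($Z_1 \leq D$), only two regimes have positive measure: $D - Z_1 < \hat{z_2} \leq Z_1 + D$, in which $\mathbf{b}(0,\hat{z_2})$ overlaps but does not contain the hole, and $\hat{z_2} > Z_1 + D$, in which the ball contains the hole. If the user lies outside the hole ($Z_1 > D$), three regimes appear: $\hat{z_2} \leq Z_1 - D$ (disjoint balls), $Z_1 - D < \hat{z_2} \leq Z_1 + D$ (partial overlap), and $\hat{z_2} > Z_1 + D$ (ball contains hole). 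The required area is then $\pi\hat{z_2}^2$ in the disjoint case, $\pi\hat{z_2}^2 - \pi D^2$ in the containing case, and $\pi \hat{z_2}^2 - A_{\mathrm{ins}}(\hat{z_2},Z_1)$ in the partial-overlap cases, where $A_{\mathrm{ins}}$ is the classical lens-intersection area of two discs of radii $\hat{z_2}$ and $D$ whose centers are separated by $Z_1$. Rewriting this lens area in the three-argument $\mathcal{B}$, $\mathcal{A}$ notation of the statement reproduces exactly $A_{\mathrm{ins}} = \mathcal{B}(D,\hat{z_2},Z_1) + \mathcal{B}(\hat{z_2},D,Z_1) - \tfrac{1}{2}\mathcal{A}(D,\hat{z_2},Z_1)$.

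Finally, I would differentiate the conditional CDF $1-\exp(-\lambda_2 |\cdot|)$ with respect to $\hat{z_2}$ in each regime to obtain the density. In the disjoint and containing regimes this is immediate and gives the pure-PPP-style exponential forms $2\pi\lambda_2 \hat{z_2} \exp(-\pi\lambda_2 \hat{z_2}^2)$ and $2\pi\lambda_2 \hat{z_2}\exp(-\pi\lambda_2(\hat{z_2}^2 - D^2))$, respectively. In each partial-overlap regime, the chain rule produces the additional $-\lambda_2 \, \nrmd A_{\mathrm{ins}}/\nrmd\hat{z_2}$ factor multiplying the exponential, leading to the mixed form appearing in the statement. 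The only non-trivial bookkeeping is aligning the endpoints of the geometric sub-cases so that the piecewise densities stitch together continuously, and verifying that the lens-area derivative (which by Leibniz's rule equals the arc length of $\partial\mathbf{b}(0,\hat{z_2})$ that lies inside the hole) is handled uniformly on both sides of the $Z_1 = D$ boundary; all the individual analytical steps are otherwise elementary.
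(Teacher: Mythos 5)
Your proposal is correct and follows essentially the same route as the paper's proof: conditioning on $Z_1$, applying the void probability of the baseline PPP $\Omega$ to the region $\mathbf{b}(0,\hat{z_2})\setminus\mathbf{b}(\nbx_1^*,D)$, and differentiating the resulting conditional CCDF. Your explicit enumeration of the geometric regimes (disjoint, partial overlap, containment, and the zero-measure regime $\hat{z_2}\leq D-Z_1$ when the user is inside the hole) is in fact more detailed than the paper's two-line argument, which leaves that case bookkeeping implicit.
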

  \noindent\rule{18.5cm}{0.4pt}
\end{figure*}

Using these results, we now focus on the derivation of coverage probability in the two cases: (i) typical user served by the macro tier ($\pc_1$), and (ii) typical user served by the small cell tier ($\pc_2$). Coverage probability $\pc_k$ is defined as the probability that the instantaneous $\sir$ of a typical user when it is served by the $k^{th}$ tier BS (denoted by $\sir({Z_{k}})$) is greater than a target $\sir$ $\gamma$. It is mathematically defined as 
\begin{align}\label{CovgEchTi}
\pc_k&=\E_{Z_{k}}\left[ \,\mathbb{P}\{\sir({z_{k}})\geq \gamma\,|{Z_{k}}\}\right]\\ \nonumber&{=}
\int_{{z_{k}}>0}\,\mathbb{P}\{\sir({z_{k}})\geq \gamma\}f_{Z_{k}}({z_{k}})\nrmd {z_{k}}.
\end{align}
Key challenge in the derivation of $\pc_k$ is the presence of holes (exclusion zones) in the PHP $\Phi_2$. Recently, we developed new tools to handle these exclusion zones in an ad hoc network modeled by a PHP in~\cite{yazdanshenasan2016poisson}. We extend these tools to derive remarkably right bounds and approximations for the coverage probability $\pc_k$. More details are provided next.

\subsubsection{Coverage probability when the typical user is served by its closest macro cell}
In this case, the 
set of BSs contributing to interference are macro BSs from $\Phi_1\setminus \nbx_1^*$ and small BSs from $\Phi_2$.
We first derive the coverage probability of the typical user by considering only the closest hole to the typical user.
Thereby, the interference field of small cell BSs is overestimated by $\Omega \cap \nbb^c(\nbx_{1}^*,D)$. 
This provides a lower bound on the coverage probability, which is derived next. 
\begin{theorem}\label{ThmA1} 
Considering only the closest hole in the interference field of small cells, the coverage probability of a typical user when it connects to its closest macro BS is
\begin{align}
\pc_1&\geq \int_{0}^{D}\ncalG_{1}^{(1)}(s)\ncalG_{2}^{(1)}(s,z_1)\ncalG_{3}^{(1)}(s,z_1)f_{Z_1}(z_1)\nrmd z_1\\ \nonumber&{+}\int_{D}^{\infty}\hat{\ncalG}_{1}^{(1)}(s)\ncalG_{2}^{(1)}(s,z_1)\ncalG_{3}^{(1)}(s,z_1)f_{Z_1}(z_1)\nrmd z_1,
\end{align}
where $s=\frac{\gamma {z_1}^{\alpha}}{P_1}$,
$\ncalG_{1}^{(1)}(s)=\exp\left(\int_{D-z_1}^{\infty}\frac{-2\pi\lambda_2 r\mathrm{d}r}{1+{\frac{r^{\alpha}}{sP_2}}}\right)$, $\hat{\ncalG}_{1}^{(1)}(s)=
\exp\left(-\pi\lambda_2 \frac{(sP_2)^{2/\alpha}}{\mathrm{sinc}(2/\alpha)}\right)$, $\ncalG_{2}^{(1)}(s,z_1)= \exp\left(\int_{|{z_1-D}|}^{{z_1+D}}{\arccos\left(\frac{r^2+z_1^2-D^2}{2z_1r}\right)}\frac{2\lambda_2}{1+\frac{r^\alpha}{P_2s}} r\mathrm{d}r\right)$,  $\ncalG_{3}^{(1)}(s,z_1)=
\exp\Big(-2\pi\lambda_1 \int_{{z_1}}^{\infty} \Big({1-\frac{1}{1+s P_1 v^{-\alpha}}}\Big)v\nrmd v\Big)$.
\end{theorem}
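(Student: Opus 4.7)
The plan is to condition on the serving distance $Z_1 = z_1$ and exploit the Rayleigh fade on the desired signal: since $h_{\nbx_1^*} \sim \exp(1)$,
\[
\P(\sir(z_1) \geq \gamma \mid Z_1 = z_1) = \E\bigl[\exp(-s(I_1 + I_2))\bigr] = \ncalL_{I_1}(s)\,\ncalL_{I_2}(s),
\]
where $s = \gamma z_1^\alpha/P_1$ and $I_k$ is the aggregate interference from tier $k$. Conditioned on $\nbx_1^*$ (and under the single-hole simplification below), the two interference processes are independent, so the Laplace transform factors. Unconditioning on $z_1$ with $f_{Z_1}$ from \eqref{Eq: clds00} then yields precisely the integral form in the statement, so the task reduces to identifying the two Laplace transforms and justifying the inequality.

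For the macro contribution, Slivnyak's theorem gives that $\Phi_1 \setminus \{\nbx_1^*\}$, conditioned on the nearest point being at distance $z_1$, is a PPP of density $\lambda_1$ on $\R^2 \setminus \mathbf{b}(\nb0, z_1)$. The standard PPP Laplace functional, combined with averaging over the Rayleigh fades of the interferers, delivers $\ncalG_3^{(1)}(s, z_1)$ directly.

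The small-cell contribution is where the PHP structure bites. The plan is to use the single-nearest-hole simplification from \cite{yazdanshenasan2016poisson}: replace $\Phi_2 = \Omega \setminus \Xi_D$ by $\Omega \setminus \mathbf{b}(\nbx_1^*, D)$, which keeps only the hole around the serving macro BS while dropping every other hole of $\Xi_D$. Since $\Phi_2 \subseteq \Omega \setminus \mathbf{b}(\nbx_1^*, D)$, this overestimates $I_2$ and therefore lower-bounds $\pc_1$. Applying the PPP Laplace functional over $\R^2 \setminus \mathbf{b}(\nbx_1^*, D)$ and passing to polar coordinates centered at the origin, at each radius $r$ the angular measure of the portion of the circle lying inside $\mathbf{b}(\nbx_1^*, D)$ equals, by the law of cosines, $2\arccos\bigl(\tfrac{r^2 + z_1^2 - D^2}{2 z_1 r}\bigr)$ for $r \in [|z_1-D|,\,z_1+D]$, equals $2\pi$ for $r < D - z_1$ (which can only happen when $z_1 < D$, i.e.\ the origin itself lies inside the hole), and equals $0$ otherwise. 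Writing $\int_{\R^2 \setminus \mathbf{b}} = \int_{\R^2} - \int_{\mathbf{b}}$, splitting on whether $z_1 \leq D$ or $z_1 > D$, and separating the purely radial piece from the chord-overlap piece then produces $\ncalG_1^{(1)}$ (radial integral starting at $D - z_1$, case $z_1 \leq D$) and $\hat{\ncalG}_1^{(1)}$ (full radial integral on $[0,\infty)$, evaluated via the standard $\mathrm{sinc}$ identity, case $z_1 > D$), each multiplied by the chord correction $\ncalG_2^{(1)}(s, z_1)$.

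The main obstacle will be the geometric bookkeeping in the polar integral: correctly tracking, at each $r$, how much of the circle of radius $r$ lies inside $\mathbf{b}(\nbx_1^*, D)$, and cleanly patching the two qualitatively different regimes $z_1 \leq D$ and $z_1 > D$ so that the result factors into the stated $\ncalG_i^{(1)}$ pieces. Once the geometry is nailed down, everything else is routine PPP Laplace functional manipulation.
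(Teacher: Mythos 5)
Your proposal is correct and follows essentially the same route as the paper's proof: overestimate the small-cell interference by keeping only the hole around the serving macro BS, write the Laplace functional over $\R^2\setminus\mathbf{b}(\nbx_1^*,D)$ as a full-plane integral minus the cosine-law chord correction (splitting on $z_1\leq D$ versus $z_1>D$), and handle the macro tier by the standard PGFL outside $\mathbf{b}(\nb0,z_1)$ before deconditioning on $Z_1$. The only cosmetic difference is that you track the angular measure $\psi(r)$ explicitly (with $\psi(r)=2\pi$ for $r<D-z_1$), whereas the paper argues the $z_1<D$ case by noting the nearest small cell is at least $D-z_1$ away; these are equivalent.
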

\begin{proof}
See Appendix \ref{app: B}.
\end{proof}
The tightness of this bound will be demonstrated in the numerical results section (see Fig.~\ref{Covg_MF00}). 

We now extend the above approach to incorporate all the holes in the interference field. In order to maintain tractability, we ignore the effect of overlaps among the holes such that each hole is assumed to be carved out individually from the baseline PPP instead of carving out the union of the holes.
Some of the points of the baseline PPP located inside the holes may be virtually removed multiple times by using this approach, thus leading to underestimation of the interference power experienced by the typical user from small cell BSs. 
This leads to an upper bound on the coverage probability. Please refer to~\cite{yazdanshenasan2016poisson} for more details where we proposed this bounding technique for the analysis of ad hoc networks.  

\begin{theorem}\label{ThmA2} 
The coverage probability of a typical user which is served by its closest macro BS is upper bounded by 
\begin{align}
\pc_{1}&\leq \int_{0}^{D}\ncalG_{1}^{(1)}(s)\ncalG_{2}^{(1)}(s,z_1)\ncalG_{4}^{(1)}(s,z_1)f_{Z_1}(z_1)\nrmd z_1
\nonumber\\&+
\int_{D}^{\infty}\hat{\ncalG}_{1}^{(1)}(s)\ncalG_{2}^{(1)}(s,z_1)\ncalG_{4}^{(1)}(s,z_1)f_{Z_1}(z_1)\nrmd z_1
\end{align}
where $s=\frac{\gamma {z_1}^{\alpha}}{P_1}$, 
$\ncalG_{1}^{(1)}(s)=\exp\!\left(\!\int_{D-z_1}^{\infty}\!\frac{-2\pi\lambda_2 r\mathrm{d}r}{1+{\frac{r^{\alpha}}{sP_2}}}\!\right)$, $\hat{\ncalG}_{1}^{(1)}(s)=
\exp\left(-\pi\lambda_2 \frac{(sP_2)^{2/\alpha}}{\mathrm{sinc}(2/\alpha)}\right)$, $\ncalG_{2}^{(1)}(s,z_1)= \exp\left(f(s,z_1)\right)$,  $f(s,z_1)=\int_{|{z_1-D}|}^{{z_1+D}}{\arccos\left(\frac{r^2+z_1^2-D^2}{2z_1r}\right)}\frac{2\lambda_2}{1+\frac{r^\alpha}{P_2s}} r\mathrm{d}r$, 
\small
$\ncalG_{4}^{(1)}(s,z_1)=
\exp\Big(-2\pi\lambda_1 \int_{{z_1}}^{\infty} \Big({1-\exp\big(f(s,v)\big)\zeta(s,v)}\Big)v\nrmd v\Big)$,
\normalsize
$\zeta(s,v)=\frac{1}{1+s P_1 v^{-\alpha}}$.
\end{theorem}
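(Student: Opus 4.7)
The plan is to condition on $Z_1=z_1$, turn the conditional coverage probability into a joint Laplace transform of the small-cell and non-serving macro interference at $s=\gamma z_1^\alpha/P_1$, apply a tractable upper bound to the small-cell term, and integrate against $f_{Z_1}$. Since the serving-link fading is $\exp(1)$, the event $\sir(z_1)\geq\gamma$ collapses to $\E[\exp(-sI_{\Phi_1\setminus\{\nbx_1^*\}}-sI_{\Phi_2})\mid z_1]$. Conditional on $\Phi_1$, the small-cell process $\Phi_2$ is an inhomogeneous PPP of intensity $\lambda_2\mathbf{1}\{x\notin\Xi_D\}$, so its Laplace transform equals $\exp(-\lambda_2\int_{\R^2\setminus\Xi_D}g\,\nrmd x)$ with $g(x)=1/(1+\|x\|^\alpha/(sP_2))$. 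The key obstacle is that $\Xi_D=\bigcup_{\nbx_1\in\Phi_1}\mathbf{b}(\nbx_1,D)$ couples every macro BS to the small-cell Laplace functional and blocks a direct PGFL step over $\Phi_1$.

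To break this coupling I would use the pointwise union bound $\mathbf{1}\{x\in\Xi_D\}\leq\sum_{\nbx_1\in\Phi_1}\mathbf{1}\{x\in\mathbf{b}(\nbx_1,D)\}$, which on integrating against $g$ and exponentiating yields
\begin{equation*}
\E\bigl[\exp(-sI_{\Phi_2})\,\big|\,\Phi_1\bigr]\leq\exp\Bigl(-\lambda_2\textstyle\int_{\R^2}g\,\nrmd x\Bigr)\prod_{\nbx_1\in\Phi_1}\exp\Bigl(\lambda_2\textstyle\int_{\mathbf{b}(\nbx_1,D)}g\,\nrmd x\Bigr).
\end{equation*}
This over-estimates the Laplace transform of $I_{\Phi_2}$ because points of $\Omega$ lying in overlapping exclusion zones are removed more than once, which is exactly what delivers the stated \emph{upper} bound on $\pc_1$ (the same device as in~\cite{yazdanshenasan2016poisson}). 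The crucial feature is that the per-BS factors are now multiplicative over $\nbx_1\in\Phi_1$, enabling a PGFL step.

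To close the computation I would split off the serving BS from the product. Combining the $-\int_{\R^2}g$ term with the serving-BS factor rewrites its contribution as $\exp(-\lambda_2\int_{\R^2\setminus\mathbf{b}(\nbx_1^*,D)}g\,\nrmd x)$. In polar coordinates around the origin, a circle of radius $r$ intersects $\mathbf{b}(\nbx_1^*,D)$ in an arc of angle $2\arccos\bigl(\tfrac{r^2+z_1^2-D^2}{2z_1r}\bigr)$ for $r\in[|z_1-D|,z_1+D]$, which decomposes the integral into a full-annulus piece (giving $\ncalG_1^{(1)}$, or $\hat{\ncalG}_1^{(1)}$ when $z_1>D$ and the radial integration starts at $0$) plus an $\arccos$-weighted piece (giving $\ncalG_2^{(1)}$); the $z_1\lessgtr D$ case split simply records whether the origin lies inside $\mathbf{b}(\nbx_1^*,D)$ and shifts the lower radial limit from $0$ to $D-z_1$. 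For each non-serving macro BS, I would bundle its hole correction $\exp(\lambda_2\int_{\mathbf{b}(\nbx_1,D)}g\,\nrmd x)=\exp(f(s,\|\nbx_1\|))$ with its macro-interference Laplace factor $\zeta(s,\|\nbx_1\|)=1/(1+sP_1\|\nbx_1\|^{-\alpha})$ into a single multiplicative mark, invoke Slivnyak, and apply the PGFL of a PPP of intensity $\lambda_1$ on $\R^2\setminus\mathbf{b}(0,z_1)$ to collapse the product into $\ncalG_4^{(1)}(s,z_1)$. Multiplying the three factors and integrating against $f_{Z_1}$ on the two $z_1$-regimes yields the stated bound.

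The main obstacle is the tightness bookkeeping around the union-bound step: its looseness is concentrated precisely in clusters of overlapping exclusion zones (densely packed macro BSs), but any refinement would reintroduce $\Phi_1$-coupling across the holes and destroy the PGFL factorization, so the bound's tightness will likely have to be validated numerically rather than analytically. A secondary technical point is keeping the radial-limit bookkeeping for $z_1\lessgtr D$ consistent, which is where the $\ncalG_1^{(1)}$ versus $\hat{\ncalG}_1^{(1)}$ split arises.
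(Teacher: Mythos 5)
Your proposal is correct and follows essentially the same route as the paper's proof: write the conditional coverage as the joint Laplace transform, factor the small-cell term via the conditional PGFL of $\Omega$ given $\Xi_D$, upper-bound $\int_{\Xi_D}g$ by the sum of per-hole integrals (the paper's ``carve each hole individually, ignoring overlaps'' step, which you phrase as a pointwise union bound on the indicator), split off the serving BS's hole in polar coordinates to get $\ncalG_1^{(1)}$ (or $\hat{\ncalG}_1^{(1)}$) and $\ncalG_2^{(1)}$, bundle $\exp(f(s,\|\nbx_1\|))\zeta(s,\|\nbx_1\|)$ as a mark on each non-serving macro BS, and apply the PGFL on $\R^2\setminus\mathbf{b}(0,z_1)$ to obtain $\ncalG_4^{(1)}$ before deconditioning on $Z_1$. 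Your explanation of the $z_1\lessgtr D$ case split (whether the origin lies inside the serving hole, shifting the lower radial limit to $D-z_1$) is in fact slightly more explicit than the paper's.
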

\begin{proof}
See Appendix \ref{app: C}.
\end{proof}
\subsubsection{Coverage probability when the typical user is served by its closest small cell}
In this case, the set of interfering BSs constitutes macro BSs from $\Phi_1$ and small cell BSs from $\Phi_2\setminus \nbx_2^*$ where $\nbx_2^*$ denotes the serving small cell BS. As was the case in Theorem~\ref{ThmA1}, we again consider only the effect of the closest hole to the typical user, thus overestimating the interference field. For the distribution of serving distance $Z_2$, we use its approximation derived in Lemma~\ref{Lemma1_Ass2}. The resulting approximation for the coverage probability is provided next.
\begin{theorem}\label{ThmA3} 
Considering only the closest hole in the interference field, the coverage probability of the typical user which connects to its closest small cell BS is approximated by
\begin{align}
&\pc_2{\simeq}
\!\!\int_{0}^{D}\int_{D-z_1}^{\infty}\!\!\!\!\ncalG_{1}^{(2)}(s,\hat{z_2})\ncalG_{2}^{(2)}(s,z_1,\hat{z_2})\ncalG_{3}^{(2)}(s,z_1)\ncalG_{4}^{(2)}(s,z_1)\nonumber\\&f_{\hat{Z_2}|Z_1\leq D}(\hat{z_2})
f_{Z_1}(z_1)\nrmd \hat{z_2}\nrmd z_1
+\nonumber\\&
\!\!\int_{D}^{\infty}\int_{0}^{\infty}\!\!\!\!\ncalG_{1}^{(2)}(s,\hat{z_2})\ncalG_{2}^{(2)}(s,z_1,\hat{z_2})\ncalG_{3}^{(2)}(s,z_1)\ncalG_{4}^{(2)}(s,z_1)
\nonumber\\&
f_{\hat{Z_2}|Z_1>D}(\hat{z_2})f_{Z_1}(z_1)\nrmd \hat{z_2}\nrmd z_1
\end{align}
where $s=\frac{\gamma {\hat{z_2}}^{\alpha}}{P_2}$ , $\ncalG_{1}^{(2)}(s,\hat{z_2})=\exp\left(\int_{\hat{z_2}}^{\infty}\frac{-2\pi\lambda_2 r\mathrm{d}\nbr}{1+{\frac{r^{\alpha}}{sP_2}}}\right)$, $\ncalG_{2}^{(2)}(s,z_1,\hat{z_2})= \exp\left(g(s,z_1,\hat{z_2})\right)$,  $g(s,z_1,\hat{z_2})=\exp\!\left({\int_{\max(\hat{z_2},|z_1-D|)}^{\max(\hat{z_2},z_1+D)}\!\arccos\left(\frac{r^2+
z_1^2-D^2}{2z_1r}\right)\frac{2\lambda_2}{1+\frac{r^{\alpha}}{sP_2}} r \nrmd r}\!\right)$,  $\ncalG_{3}^{(2)}(s,z_1)=\zeta(s,z_1)=\frac{1}{1+s P_1 z_1^{-\alpha}}$, $\ncalG_{4}^{(2)}(s,z_1)=\\
\exp\left(\!\!-2\pi\lambda_1 \!\int_{{z_1}}^{\infty} \!\!\left({1- \zeta(s,v)}\right)v\nrmd v\right)$.
\end{theorem}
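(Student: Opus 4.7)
The plan is to follow the Laplace-transform methodology already used for Theorem 1, adapted to the case in which the serving BS sits in the small cell tier at distance $\hat{Z_2}$, and to feed in the nearest-hole serving distance distribution from Lemma~1. Conditioning on $Z_1=z_1$ and $\hat{Z_2}=\hat{z_2}$, and using $h_{\nbx_2^*}\sim\exp(1)$, the conditional coverage probability reduces to
\begin{align*}
\P\!\left(\sir(\hat{z_2})\ge\gamma\,\big|\,z_1,\hat{z_2}\right)=\ncalL_{I_1}(s)\,\ncalL_{I_2}(s),
\end{align*}
with $s=\gamma\hat{z_2}^{\alpha}/P_2$, where $I_1$ is the macro-tier aggregate interference given that the nearest macro lies at distance $z_1$, and $I_2$ is the small-cell aggregate interference given that the serving small cell lies at distance $\hat{z_2}$ and the nearest hole is centred at $\nbx_1^*$. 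Independence of $\Phi_1$ and $\Omega$ justifies the product form.

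\textbf{Macro-tier factor.} Every macro in $\Phi_1$ interferes. First I treat the closest macro at distance $z_1$, whose contribution to $\ncalL_{I_1}(s)$ is the single-point factor $\zeta(s,z_1)=\ncalG_3^{(2)}(s,z_1)$. The remaining macros form a PPP of intensity $\lambda_1$ on $\nbb^c(0,z_1)$, so applying the PGFL of a PPP gives $\ncalG_4^{(2)}(s,z_1)$.

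\textbf{Small-cell factor and the hole.} Interfering small cells live in $(\Omega\setminus\nbb(\nbx_1^*,D))\cap\nbb^c(0,\hat{z_2})$. The key observation is that $\Omega\cap\nbb(\nbx_1^*,D)$ and $\Omega\setminus\nbb(\nbx_1^*,D)$ are independent PPPs, which lets me write
\begin{align*}
\ncalL_{I_2}(s)=\frac{\ncalL_{\Omega\cap\nbb^c(0,\hat{z_2})}(s)}{\ncalL_{\Omega\cap\nbb(\nbx_1^*,D)\cap\nbb^c(0,\hat{z_2})}(s)}.
\end{align*}
The numerator is the standard PPP Laplace transform outside a disc of radius $\hat{z_2}$, which yields $\ncalG_1^{(2)}(s,\hat{z_2})$. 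For the denominator I switch to polar coordinates centred at the origin; at radius $r$, the angular measure subtended by $\nbb(\nbx_1^*,D)$ equals $2\arccos\!\left((r^2+z_1^2-D^2)/(2rz_1)\right)$ whenever $|z_1-D|\le r\le z_1+D$, and zero otherwise. Intersecting this radial range with $[\hat{z_2},\infty)$ produces the integration limits $[\max(\hat{z_2},|z_1-D|),\max(\hat{z_2},z_1+D)]$, and reciprocating gives the correction factor $\ncalG_2^{(2)}(s,z_1,\hat{z_2})$.

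\textbf{Assembly and main obstacle.} Multiplying the four factors and integrating against the joint density of $(Z_1,\hat{Z_2})$ obtained by combining Lemma~\ref{Lemma1_Ass2} with \eqref{Eq: clds00} yields the stated expression; the split into two outer integrals ($z_1\le D$ and $z_1>D$) and the inner lower limits $D-z_1$ and $0$ mirror directly the piecewise support of $f_{\hat{Z_2}|Z_1}$ in Lemma~\ref{Lemma1_Ass2}. The most delicate step will be the geometric set-up of the hole's angular measure as seen from the origin (with two sub-cases depending on whether the origin lies inside $\nbb(\nbx_1^*,D)$) together with the independence-based quotient argument for $\ncalL_{I_2}$. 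I would finally note that the statement uses ``approximated by'' rather than ``bounded by'' because the nearest-hole simplification shortens $Z_2$ (aiding coverage) while simultaneously over-counting small-cell interferers (hurting coverage), so the two error sources act in opposite directions and neither dominates uniformly.
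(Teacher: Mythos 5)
Your proposal is correct and follows essentially the same route as the paper's Appendix D: condition on $Z_1$ and $\hat{Z_2}$, factor the conditional coverage into the small-cell Laplace functional (full PPP outside $\nbb(0,\hat{z_2})$ with the closest hole's contribution added back in polar coordinates, which is exactly your quotient of Laplace functionals) and the macro factor (closest macro as $\zeta(s,z_1)$ plus the PGFL over the rest), then decondition using Lemma~\ref{Lemma1_Ass2} and \eqref{Eq: clds00}. Your closing remark on why the result is an approximation rather than a one-sided bound matches the paper's reasoning as well.
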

\begin{proof}
See Appendix \ref{app: D}.
\end{proof}
The tightness of this approximation will be demonstrated in the numerical results section (see Fig.~\ref{Covg_MF00}). On the same lines as Theorem~\ref{ThmA2}, we now provide another approximation for the coverage probability where we underestimate the interference power from the small cells by carving out holes in a PHP individually without considering overlaps amongst them. The resulting approximation is provided next.
\begin{theorem}\label{ThmA4} 
The coverage probability of a typical user which is served by its closest small cell BS is approximated by
\begin{align}
&\pc_2{\simeq}
\!\!\int_{0}^{D}\int_{D-z_1}^{\infty}\!\!\!\!\ncalG_{1}^{(2)}(s,\hat{z_2})\ncalG_{2}^{(2)}(s,z_1,\hat{z_2})\ncalG_{3}^{(2)}(s,z_1)\nonumber\\&\ncalG_{5}^{(2)}(s,z_1,\hat{z_2})f_{\hat{Z_2}|Z_1\leq D}(\hat{z_2})
f_{Z_1}(z_1)\nrmd \hat{z_2}\nrmd z_1
+\nonumber\\&
\!\!\int_{D}^{\infty}\int_{0}^{\infty}\!\!\!\!\ncalG_{1}^{(2)}(s,\hat{z_2})\ncalG_{2}^{(2)}(s,z_1,\hat{z_2})\ncalG_{3}^{(2)}(s,z_1)\ncalG_{5}^{(2)}(s,z_1,\hat{z_2})
\nonumber\\&
f_{\hat{Z_2}|Z_1>D}(\hat{z_2})f_{Z_1}(z_1)\nrmd \hat{z_2}\nrmd z_1
\end{align}
where $s=\frac{\gamma {\hat{z_2}}^{\alpha}}{P_2}$ , $\ncalG_{1}^{(2)}(s,\hat{z_2})=\exp\left(\int_{\hat{z_2}}^{\infty}\frac{-2\pi\lambda_2 r\mathrm{d}\nbr}{1+{\frac{r^{\alpha}}{sP_2}}}\right)$, $\ncalG_{2}^{(2)}(s,z_1,\hat{z_2})= \exp\left(g(s,z_1,\hat{z_2})\right)$,  $g(s,z_1,\hat{z_2})=\exp\!\left({\int_{\max(\hat{z_2},|z_1-D|)}^{\max(\hat{z_2},z_1+D)}\!\arccos\left(\frac{r^2+
z_1^2-D^2}{2z_1r}\right)\frac{2\lambda_2}{1+\frac{r^{\alpha}}{sP_2}} r \nrmd r}\!\right)$,  $\ncalG_{3}^{(2)}(s,z_1)=\zeta(s,z_1)=\frac{1}{1+s P_1 z_1^{-\alpha}}$, $\ncalG_{5}^{(2)}(s,z_1)=\\
\exp\left(\!\!-2\pi\lambda_1 \!\int_{{z_1}}^{\infty} \!\!\left({1- \exp\big(g(s,v)\big)\zeta(s,v)}\right)v\nrmd v\right)$.
\end{theorem}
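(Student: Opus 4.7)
The plan is to mirror the proof of Theorem~\ref{ThmA3} while upgrading its ``closest-hole-only'' treatment of the small-cell interference field to the ``independent-hole carving'' approximation used to pass from Theorem~\ref{ThmA1} to Theorem~\ref{ThmA2}. First, I would condition on the serving distances $\hat{Z}_2 = \hat{z}_2$ and $Z_1 = z_1$, approximate $Z_2$ by $\hat{Z}_2$ exactly as in Lemma~\ref{Lemma1_Ass2}, and use Rayleigh fading to rewrite the conditional $\sir$ coverage as the Laplace transform $\E[\exp(-s(I_1+I_2))]$ at $s = \gamma \hat{z}_2^\alpha / P_2$, where $I_1$ and $I_2$ denote the aggregate interference from $\Phi_1$ and $\Phi_2 \setminus \nbx_2^*$, respectively.

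The central step is the evaluation of $\E[\exp(-sI_2) \mid \Phi_1]$ via the independent-hole trick of~\cite{yazdanshenasan2016poisson}: treat $\Omega$ as a bare PPP of intensity $\lambda_2$, factor out the Laplace transform of its interference outside $\nbb(0, \hat{z}_2)$ to produce $\ncalG_1^{(2)}(s, \hat{z}_2)$, and then insert one compensation factor per macro BS to add back the small cells that were carved out of $\nbb(\nbx_1, D)$. The factor for the closest macro at $z_1$ is $\ncalG_2^{(2)}(s, z_1, \hat{z}_2)$, exactly as in Theorem~\ref{ThmA3}; for every other macro at distance $v > z_1$ the same $\arccos$-based calculation, with $v$ in place of $z_1$, yields $\exp(g(s, v, \hat{z}_2))$.

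For each interfering macro at distance $v$ I would then combine the Rayleigh-fading Laplace factor $\zeta(s, v) = 1/(1 + sP_1 v^{-\alpha})$ with its hole-compensation factor $\exp(g(s, v, \hat{z}_2))$, and apply the PGFL to the reduced process $\Phi_1 \setminus \nbx_1^*$, which conditionally on $Z_1 = z_1$ is a PPP of intensity $\lambda_1$ on $\{v : \|v\| > z_1\}$. This produces
\begin{align*}
\ncalG_5^{(2)}(s, z_1) = \exp\!\left(-2\pi\lambda_1 \int_{z_1}^\infty \bigl(1 - \exp(g(s, v))\,\zeta(s, v)\bigr) v \, \nrmd v\right),
\end{align*}
while the closest macro at $z_1$ itself contributes the interference factor $\ncalG_3^{(2)}(s, z_1) = \zeta(s, z_1)$. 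De-conditioning against $f_{\hat{Z}_2 \mid Z_1}$ from Lemma~\ref{Lemma1_Ass2} and $f_{Z_1}$ from~\eqref{Eq: clds00}, and splitting the outer integral into the cases $z_1 \le D$ and $z_1 > D$ to match the piecewise form of $f_{\hat{Z}_2 \mid Z_1}$, recovers the claimed double integral.

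The main obstacle, and the reason for $\simeq$ rather than a one-sided bound, is that two competing approximations act on $\pc_2$ in opposite directions: replacing $Z_2$ by $\hat{Z}_2$ underestimates the serving distance and thus inflates the received signal, while independent-hole carving double-counts points of $\Omega$ in overlapping exclusion zones and thus deflates the small-cell interference; neither error is easily absorbed into a clean bound. The only genuinely non-routine step is justifying that, given $\Phi_1$, the hole compensations multiply across macro BSs as an independent product so that the PGFL of $\Phi_1$ can be invoked at the end; once this decoupling is granted, the remaining manipulations are standard PPP Laplace-functional calculations.
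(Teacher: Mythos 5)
Your proposal is correct and follows essentially the same route as the paper's proof: condition on $Z_1$ and $\hat{Z}_2$, take the Laplace transform of the aggregate interference under Rayleigh fading, apply the independent-hole (overlap-ignoring) compensation per macro BS to get the factor $\exp(g(s,v,\hat{z}_2))\zeta(s,v)$, treat the closest macro separately via $\ncalG_2^{(2)}$ and $\ncalG_3^{(2)}$, invoke the PGFL of $\Phi_1$ on $\{v>z_1\}$ to obtain $\ncalG_5^{(2)}$, and de-condition against the piecewise densities from Lemma~\ref{Lemma1_Ass2} and \eqref{Eq: clds00}. Your closing remarks on why the result is stated as an approximation rather than a one-sided bound also match the paper's reasoning.
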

\begin{proof} 
See Appendix \ref{app: E}
\end{proof}
          \begin{figure}[t!]
  \centering{
              \includegraphics[width=.9\linewidth]{./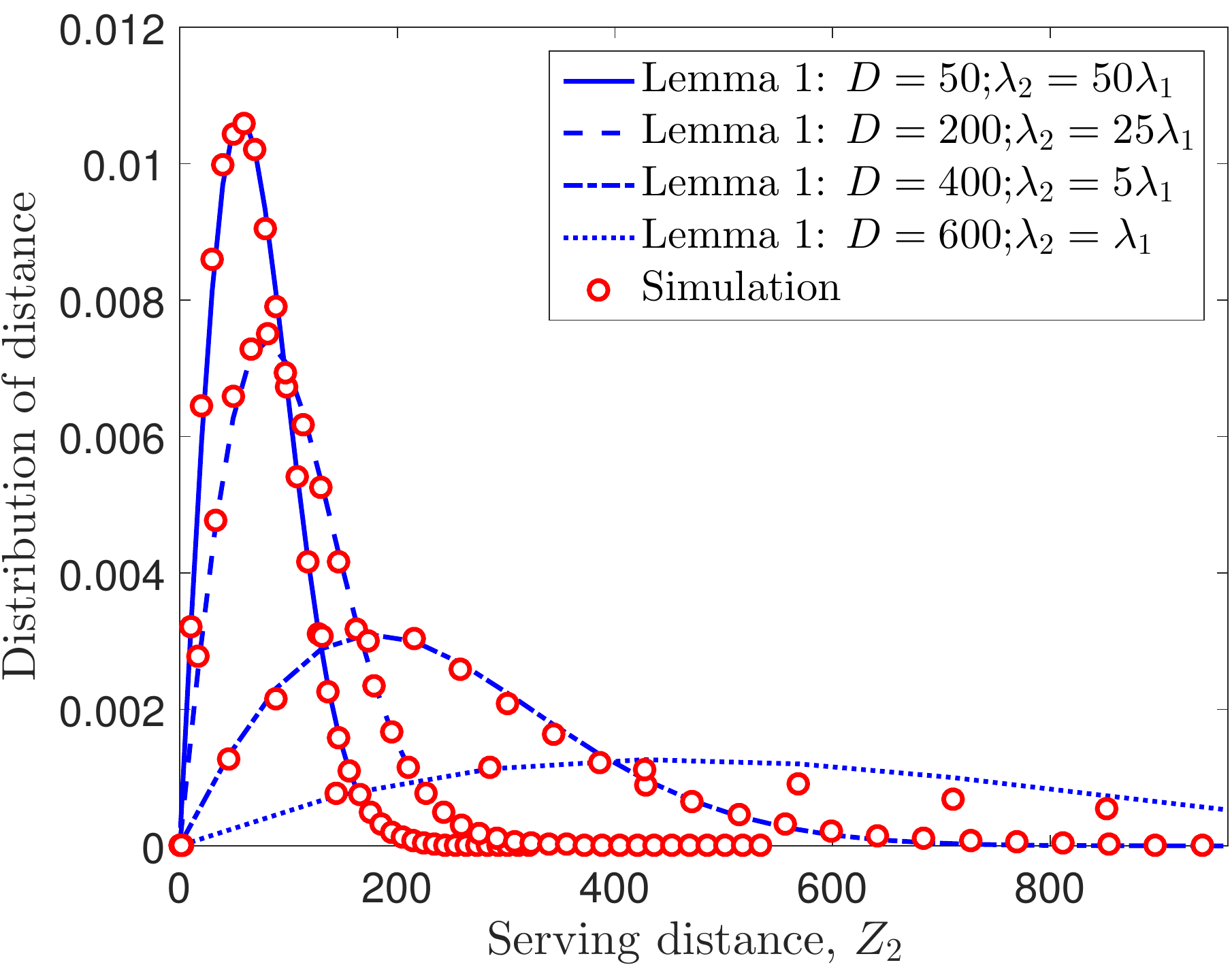}
              \caption{Distribution of the distance from the typical user to its closest small cell BS in a PHP. }
\label{DisServ_00}}
          \end{figure}
          \begin{figure}[t!]
  \centering{
              \includegraphics[width=.9\linewidth]{./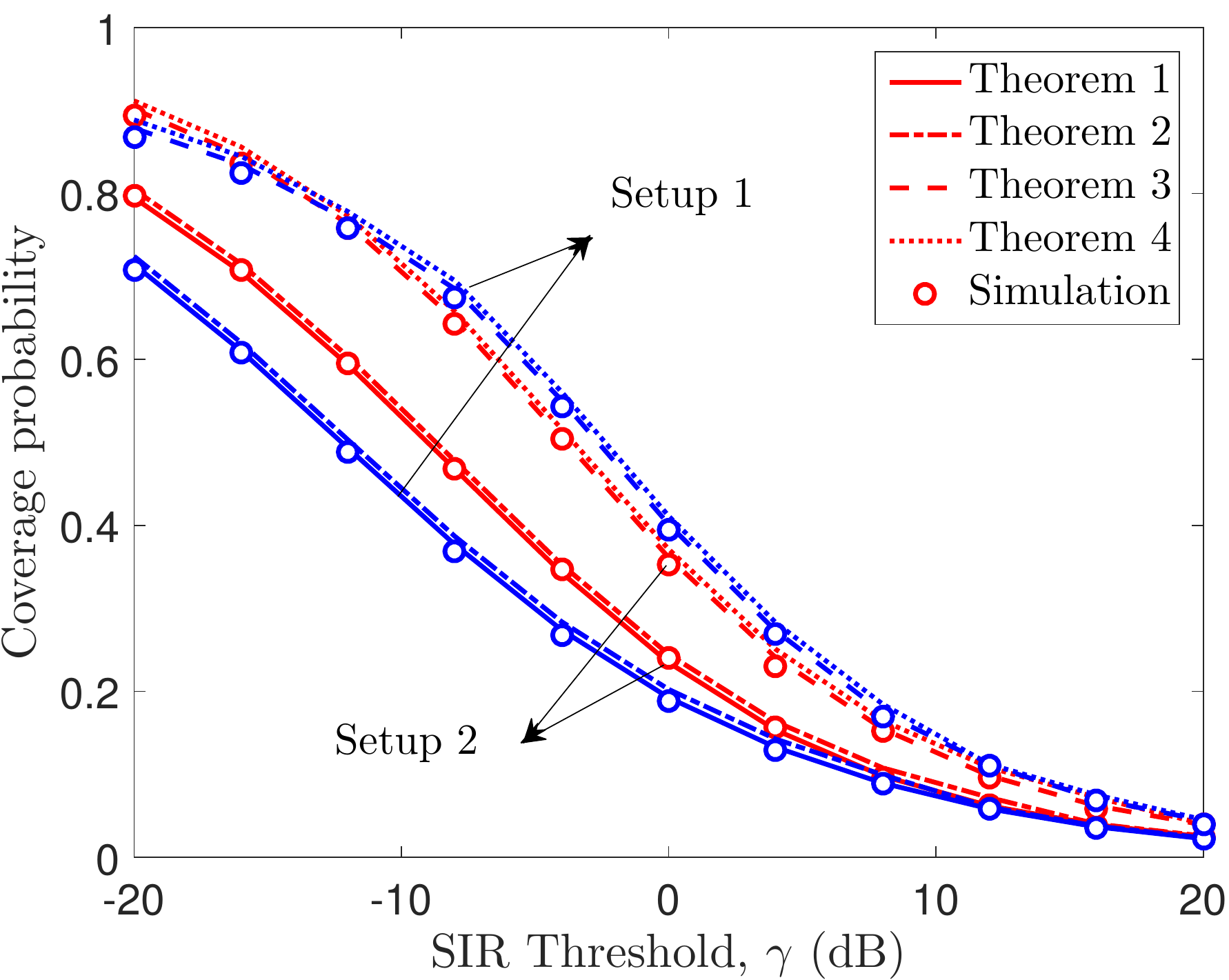}
              \caption{Comparison of macro and small cell coverage probability results.}
\label{Covg_MF00}}
          \end{figure}
\subsubsection{Discussion and Numerical Results}
\label{NumerReslt}
In this section, we validate our results by comparing them against results obtained from Monte Carlo simulations. 
We consider two setups such that in both cases, $\lambda_1=1$ BS/${\mathrm{km}}^2$, $\alpha=4$. 
In setup 1, $\lambda_2=50\lambda_1$, $P_1=10^3P_2$ and $D=50$ m, 
while in the setup 2, $\lambda_2=25\lambda_1$, $P_1=10^2P_2$ and $D=200$ m.
As illustrated in Fig.~\ref{DisServ_00}, the distribution of the distance $\hat{Z_2}$ derived in Lemma \ref{Lemma1_Ass2} which underestimated the true serving distance $Z_2$ is surprisingly tight for all configurations (even when the density of small cells is low and the radius of exclusion zone $D$ is large). 
Fig.~\ref{Covg_MF00} presents the coverage probability of typical user under two cases: (i) served by macrocell, and (ii) served by small cell, as a function of the $\sir$ threshold. Here, we plot the four analytical results given by Theorems \ref{ThmA1}- \ref{ThmA4} and compare them with results obtained from simulations. We note that all the analytical results are surprisingly tight. As expected, the coverage probability of the users served by the small cells in this setup is higher (due to lower serving distance). 
\section{Conclusion}  
In this paper, we developed a new approach to the downlink coverage probability analysis of a two-tier HCN modeled as a PHP. The proposed approach is based on the tools developed by the authors for the analysis of PHPs in \cite{yazdanshenasan2016poisson,YazDhiC2016}. The key idea is to curtail the complexity by preserving the local neighborhood of the interference field around the typical point while simplifying the far-field of interference. Using this approach, we derived a new and remarkably tight bound for the serving distance of a typical user in a PHP to its closest small cell. We then derived coverage probability for a typical user under two closed-access cases: (i) typical user is served by macro tier, and (ii) typical user is served by a small cell tier. A meaningful extension of this work is to derive the downlink coverage probability for open-access PHP-based HCNs. 
\appendix
\subsection{Proof of Lemma~\ref{Lemma1_Ass2}}
\label{app: A}
We consider only the closest hole to the typical user, thus leading to a lower bound on the serving distance $Z_2$ which is denoted by $\hat{Z_2}$.
Conditioned on the distance 
$Z_1$, 
the distribution of $\hat{Z_2}$  is
\begin{align*}
&\nbbP(\hat{Z_2} > \hat{z_2}|Z_1)\!=\!\nbbP(\text {Number of points of $\Phi_2$ in the set}\ \\\nonumber &\{\nbb(0,\hat{z_2})\} = 0)= \exp\left(-\lambda_2 \left(\pi \hat{z_2}^2-A_\mathrm{{ins}}({\hat{z_2}},Z_1)\right)\right),~ z_1 > 0.
\end{align*}
where $\hat{z_2}$ and $Z_1$ denote the distance from the typical user to its serving small cell and its distance to the closest macro BS. 
Further, $A_\mathrm{{ins}}({\hat{z_2}},Z_1)$ denotes  the area of intersection between two circles $\nbb(0,\hat{z_2})$ and $\nbb(Z_1,D)$.
The result now follows by differentiating the above expression.
\subsection{Proof of Theorem~\ref{ThmA1}}
\label{app: B}
Coverage probability conditioned on the serving distance when $Z_1\geq D$ is 
\begin{align}
&\pc_{{1}| Z_1\geq D }\!\geq
\E \Bigg[\!\exp\! \Bigg(-s\Bigg(\sum_{\nbx_2 \in \Omega \cap \nbb^c(\nbx_1^*,D)}\!\!\!P_2 h_{\nbx_2} \|\nbx_2\|^{-\alpha}\nonumber \\&+\sum_{\nbx_1\in \Phi_1/{\nbx_1^*}}P_1 h_{\nbx_1} \|\nbx_1\|^{-\alpha}\Bigg)\!\! \Bigg)\! \Bigg]\stackrel{\text{(a)}}{=}
\nonumber \\&
\E_{\Phi_1|Z_1} \Bigg[\!\exp\left(-\lambda_2 \left(\int_{\R^2 }\frac{\mathrm{d}\nbx_2}{1+{\frac{\|\nbx_2\|^{\alpha}}{sP_2}}}- \int_{\mathbf{b}(\nbx_1^*,D)}\frac{\mathrm{d}\nbx_2}{1+{\frac{\|\nbx_2\|^{\alpha}}{sP_2}}}\right)\!\right)\nonumber \\& \times  \prod_{\nbx_1 \in \Phi_\mathrm{1}/{\nbx_1^*}}\!\!\zeta(s,\|\nbx_1\|)\Bigg]
\stackrel{\text{(b)}}{=}\hat{\ncalG}_{1}^{(1)}(s)\times \exp\left(f(s,z_1)\right) \times
\nonumber\\&
\exp\Bigg(-2\pi\lambda_1 \int_{{z_1}}^{\infty} \Big({1-\zeta(s,v)}\Big)v\nrmd v\Bigg)
\end{align}
where $\zeta(s,v)=\frac{1}{1+s P_1 v^{-\alpha}}$. Step (a) follows from taking expectations over channel gains $h_{\nbx_k} \sim \exp(1)$ and by using the PGFL of the PPP\cite{chiu2013stochastic},
 $\hat{\ncalG}^{(1)}(s)=\exp\left(-\pi\lambda_2 \frac{(sP)^{2/\alpha}}{\mathrm{sinc}(2/\alpha)}\right)$, as shown in~\cite[Appendix B]{dhillon2012modeling} and the second term in (b) follows from conditioning on the $Z_1$ using the cosine-law (as shown in Fig.~\ref{Serv_dis00}): $r^2+\|\nbx_1^*\|^2- 2 r\|\nbx_1^*\| \cos\theta(r)=D^2$ along with converting form Cartesian to polar coordinates by substituting $f(s,z_1)=\int_{|{z_1-D}|}^{{z_1+D}}{\arccos\left(\frac{r^2+z_1^2-D^2}{2z_1r}\right)}\frac{2\lambda_2}{1+\frac{r^\alpha}{P_2s}} r\mathrm{d}r$ which is the effective interference power ``removed'' by the closest hole. 
It should be noted that the proof of $\pc_{{1}|Z_1<D}$ is similar to the proof of $\pc_{{1}|Z_1\geq D}$ except that the closest point of $\Phi_2$ is at least a distance
$D-z_1$ away from the typical point located inside the hole, and hence leads to  $\ncalG_{1}^{(1)}(s)=\exp\!\left(\!\int_{D-z_1}^{\infty}\!\frac{-2\pi\lambda_2 r\mathrm{d}r}{1+{\frac{r^{\alpha}}{sP_2}}}\!\right)$.
The final result follows from deconditioning over serving distance $Z_1$ whose distribution is given by (\ref{Eq: clds00}). 
\subsection{Proof of Theorem~\ref{ThmA2}}
\label{app: C}
The coverage probability in this case conditioned on the serving distance when $Z_1\geq D$ is 
\begin{align}
&
\pc_{{1}| Z_1 \geq D }\!=
\E \Bigg[\!\exp\! \Bigg(\!\!-s\Bigg(\sum_{\nbx_2\in \Phi_2}\!\!\!P_2 h_{\nbx_2} \|\nbx_2\|^{-\alpha}
\nonumber \\&+\!\!\!\!\sum_{\nbx_1\in \Phi_1/{\nbx_1^*}}\!\!\!\!\!\!\! P_1 h_{\nbx_1} \|\nbx_1\|^{-\alpha}\Bigg)\!\! \Bigg)\! \Bigg]
\nonumber \\&
\stackrel{\text{(a)}}{=} \E_{\Phi_1|Z_1} \Bigg[\!\exp\left(\!-\lambda_2 \left(\int_{\R^2 }\frac{\mathrm{d}\nbx_2}{1+{\frac{\|\nbx_2\|^{\alpha}}{sP_2}}}- \int_{\Xi_D}\frac{\mathrm{d}\nbx_2}{1+{\frac{\|\nbx_2\|^{\alpha}}{sP_2}}}\!\right)\!\right) \nonumber \\&  \times \prod_{\nbx_1 \in \Phi_\mathrm{1}/{\nbx_1^*}}\!\!\zeta(s,\|\nbx_1\|)\Bigg]
\stackrel{\text{(b)}}
{\leq} \hat{\ncalG}_{1}^{(1)}(s)\times \E_{\Phi_1|Z_1}\Bigg[\nonumber \\&
\exp\Bigg(\lambda_2 \Bigg(\sum_{\nbx_1 \in \Phi_1}\!\! \int_{\nbb(\nbx_1, D)} 
\frac{\mathrm{d}\nbx_2}{1+{\frac{\|\nbx_2\|^{\alpha}}{sP_2}}} \Bigg)\Bigg)\!\!\!\prod_{\nbx_1 \in \Phi_\mathrm{1}/{\nbx_1^*}}\!\!\zeta(s,\|\nbx_1\|)\Bigg]
\nonumber \\&  
\stackrel{\text{(c)}}{=}\ncalG_{1}^{(1)}(s)\exp\left(f(s,\|\nbx_1^*\|)\right)\E_{\Phi_1|Z_1} \Bigg[\prod_{\nbx_1 \in \Phi_\mathrm{1}/{\nbx_1^*}}\!\!\!\exp\left(f(s,\|\nbx_1\|)\right)\nonumber \\& \zeta(s,\|\nbx_1\|)\!\Bigg]
\stackrel{\text{(d)}}{=}\ncalG_{1}^{(1)}(s)\times \exp\left(f(s,z_1)\right) \times
\nonumber\\&
\exp\!\Bigg(\!\!-2\pi\lambda_1 \int_{{z_1}}^{\infty} \Big({1-\exp\big(f(s,v)\big)\times\zeta(s,v)}\Big)v\nrmd v\!\Bigg)
\end{align}
where  $\Phi_2=\Omega\!\setminus\!\Xi_D$,  $\Xi_D$ in step (a) is $\triangleq  \bigcup_{\nbx_1 \in \Phi_1} \mathbf{b}(\nbx_1,D)$ as defined in (\ref{PHPdef1}), $\zeta(s,u)=\frac{1}{1+s P_1 u^{-\alpha}}$, (a) comes from taking expectations over channel gains $h_{\nbx_k} \sim \exp(1)$ and by using the PGFL of the PPP $\Omega$ given $\Xi_D$, $\hat{\ncalG}^{(1)}(s)=\exp\left(-\pi\lambda_2 \frac{(sP_2)^{2/\alpha}}{\mathrm{sinc}(2/\alpha)}\right)$, (b) is obtained by ignoring the effect of overlaps and the fact that each hole is carved out individually
from the baseline PPP $\Omega$ which leads to an upper bound for the conditional coverage probability of the typical user which is served by macro BS. In the second term of (c), 
$f(s,u)=\int_{|{u-D}|}^{{u+D}}{\arccos\left(\frac{r^2+u^2-D^2}{2ur}\right)}\frac{2\lambda_2}{1+\frac{r^\alpha}{P_2s}} r\mathrm{d}r$.
Finally, the last term in (d) follows from the PGFL of a PPP~\cite{chiu2013stochastic}
along with converting form Cartesian to polar coordinates by substituting $v=\|\nbx_1\|$ and $f(s,v)$.
The proof for the $\pc_{{1}|Z_1 < D}$ follows on the same line as the proof of $\pc_{{1}|Z_1 \geq D}$ except that $\ncalG_{1}^{(1)}(s)=\exp\!\left(\!\int_{D-z_1}^{\infty}\!\frac{-2\pi\lambda_2 r\mathrm{d}r}{1+{\frac{r^{\alpha}}{sP_2}}}\!\right)$ because the closest point of $\Phi_2$ is at least a distance $D-z_1$ away from the typical user within the hole.  
The final result follows from deconditioning over serving distance $Z_1$.
\subsection{Proof of Theorem~\ref{ThmA3}}
\label{app: D}
We consider only the closest hole in the interference field. 
The coverage probability of a typical user which connects to the nearest small cell conditioned on the fixed distances $Z_1$ and $\hat{Z_2}$, 
is lower bounded by 
\begin{align}\label{Covg_21D}
&\pc_{2|\hat{Z_2},Z_1}(s)
{\geq}\E \Bigg[\!\exp\!\Bigg(\!\!\!-s \Bigg(\sum_{\underset{\|\nbx_2\|>\hat{z_2}}{\nbx_2 \in \Omega \cap \nbb^c(\nbx_1^*,D),}}\!\!\!\!\!\!\!\!\!\!\!\! P_2 h_{\nbx_2} \|\nbx_2\|^{-\alpha}\nonumber \\&+\!\sum_{\nbx_1\in \Phi_1}\!\!\! P_1 h_{\nbx_1} \|\nbx_1\|^{-\alpha}\Bigg)\!\Bigg)\Bigg] 
\stackrel{\text{(a)}}
{=} \ncalG_{1}^{(2)}(s,\hat{z_2}) 
\nonumber\\&
\exp\Bigg(\lambda_2 \int_{\underset{\|\nbx_2\|>\hat{z_2}}{\nbb(\nbx_1^*, D),}} \frac{\mathrm{d}\nbx_2}{1+{\frac{\|\nbx_2\|^{\alpha}}{sP_2}}}\Bigg)
\E_{\Phi_1|Z_1} \Bigg[\prod_{\nbx_1 \in \Phi_\mathrm{1}}\!\!\zeta(s,\|\nbx_1\|)\Bigg]
\nonumber\\&
\stackrel{\text{(b)}}{=}
\ncalG_{1}^{(2)}(s,\hat{z_2}) \exp\left(g\left(s,z_1,\hat{z_2}\right)\right)
\zeta(s,{z_1})
\nonumber\\&
\times \exp\left(\!\!-2\pi\lambda_1 \!\int_{{z_1}}^{\infty} \!\!\left({1- \zeta(s,v)}\right)v\nrmd v\right)
\end{align}
where in step (a), $\ncalG_{1}^{(2)}(s,\hat{z_2})=\exp\left(\int_{\hat{z_2}}^{\infty}\frac{-2\pi\lambda_2 r\mathrm{d}\nbr}{1+{\frac{r^{\alpha}}{sP_2}}}\right)$, $\zeta(s,u)=\frac{1}{1+s P_1 u^{-\alpha}}$, (a) follows from the fact that the field of possible interferers involves all of the small cell BSs outside the closest hole (except the serving BS), the second term of (b) $g(s,u,\hat{z_2})=\exp\!\left({\int_{\max(\hat{z_2},|u-D|)}^{\max(\hat{z_2},u+D)}\!\arccos\left(\frac{r^2+
u^2-D^2}{2ur}\right)\frac{2\lambda_2}{1+\frac{r^{\alpha}}{sP_2}} r \nrmd r}\!\right)$ which is due to 
the fact that the field of effective interferers is out of $\nbb(0,\hat{z_2})$ that appears in the lower and upper bounds of integral expression, ${\max(\hat{z_2},|\|\nbx_1\|-D|)}$ and ${\max(\hat{z_2},\|\nbx_1\|+D)}$, and the term of  $\zeta(s,{z_1})$ denotes the effective interference from the closest macro BS at the typical point due to conditioning on $Z_1=\|\nbx_1^*\|$. 
The final result follows from deconditioning over serving distance $\hat{Z_2}$ and then over distance $Z_1$. 
\subsection{Proof of Theorem~\ref{ThmA4}}
\label{app: E}
The coverage probability of a typical user when it is served by its closest small cell (conditioned on the fixed distances $Z_1$ and $\hat{Z_2}$ is upper bounded by 
\begin{align}\label{Covg_21D}
&\pc_{2|\hat{Z_2},Z_1}(s)
{=}\E \Bigg[\!\exp\!\Bigg(\!\!-s \Bigg(\!\sum_{\nbx_2\in \Phi_2,\|\nbx_2\|>\hat{z_2}}\!\!\!\!\!\!\!\!\!\!\!\!\!\! P_2 h_{\nbx_2} \|\nbx_2\|^{-\alpha}\nonumber \\&+\!\sum_{\nbx_1\in \Phi_1}\!\!\! P_1 h_{\nbx_1} \|\nbx_1\|^{-\alpha}\Bigg)\!\Bigg)\Bigg] 
{\leq} \ncalG_{1}^{(2)}(s,\hat{z_2})
\nonumber\\&
\E_{\Phi_1|Z_1} \Bigg[\exp\Bigg(\lambda_2 \Bigg(\sum_{\nbx_1 \in \Phi_1} \int_{\underset{\|\nbx_2\|>\hat{z_2}}{\nbb(\nbx_1, D),}} \frac{\mathrm{d}\nbx_2}{1+{\frac{\|\nbx_2\|^{\alpha}}{sP_2}}}\Bigg)\Bigg)
\nonumber\\&
\prod_{\nbx_1 \in \Phi_\mathrm{1}}\!\!\frac{1}{1+s P_1 \|\nbx_1\|^{-\alpha}}\!\Bigg]
{=}
\ncalG_{1}^{(2)}(s,\hat{z_2})\times \E_{\Phi_1|Z_1} \Bigg[\nonumber\\& \prod_{\nbx_1 \in \Phi_\mathrm{1}}\exp\!\Bigg(2\lambda_2
{\int_{\max(\hat{z_2},|\|\nbx_1\|-D|)}^{\max(\hat{z_2},\|\nbx_1\|+D)}\!\frac{ \arccos\left(\frac{r^2+
\|\nbx_1\|^2-D^2}{2\|\nbx_1\|r}\right)}{1+\frac{r^{\alpha}}{sP_2}} r \nrmd r}\!\Bigg)
\nonumber\\&
\times \frac{1}{1+s P_1 \|\nbx_1\|^{-\alpha}}\Bigg]
{=}
\ncalG_{1}^{(2)}(s,\hat{z_2})\times \exp\left(g\left(s,z_1,\hat{z_2}\right)\right)
\zeta(s,{z_1})
\nonumber\\&
\times \exp\left(\!\!-2\pi\lambda_1 \!\int_{{z_1}}^{\infty} \!\!\left({1- \exp\left(g\left(s,v,\hat{z_2}\right)\right)\zeta(s,v)}\right)v\nrmd v\right)
\end{align}
where the proof follows on the same lines as Theorem~\ref{ThmA3} except that here the field of small cell interferers contains all the holes that are individually carved out from the baseline PPP, thus underestimating the interference from small cells. 
The final result follows from deconditioning over serving distance $\hat{Z_2}$ and then over distance $Z_1$ whose distribution is given by (\ref{Eq: clds00}).  
\balance
\bibliographystyle{IEEEtran}
\bibliography{SPAWC_08.bbl}  

\begin{thebibliography}{10}
\providecommand{\url}[1]{#1}
\csname url@samestyle\endcsname
\providecommand{\newblock}{\relax}
\providecommand{\bibinfo}[2]{#2}
\providecommand{\BIBentrySTDinterwordspacing}{\spaceskip=0pt\relax}
\providecommand{\BIBentryALTinterwordstretchfactor}{4}
\providecommand{\BIBentryALTinterwordspacing}{\spaceskip=\fontdimen2\font plus
\BIBentryALTinterwordstretchfactor\fontdimen3\font minus
  \fontdimen4\font\relax}
\providecommand{\BIBforeignlanguage}[2]{{%
\expandafter\ifx\csname l@#1\endcsname\relax
\typeout{** WARNING: IEEEtran.bst: No hyphenation pattern has been}%
\typeout{** loaded for the language `#1'. Using the pattern for}%
\typeout{** the default language instead.}%
\else
\language=\csname l@#1\endcsname
\fi
#2}}
\providecommand{\BIBdecl}{\relax}
\BIBdecl

\bibitem{dhillon2012modeling}
H.~S. Dhillon, R.~K. Ganti, F.~Baccelli, and J.~G. Andrews, ``Modeling and
  analysis of {K}-tier downlink heterogeneous cellular networks,'' \emph{IEEE
  Journal on Sel. Areas in Communications}, vol.~30, no.~3, pp. 550--560, 2012.

\bibitem{mukherjee2012sinr}
S.~Mukherjee, ``Distribution of downlink {SINR} in heterogeneous cellular
  networks,'' \emph{IEEE Journal on Sel. Areas in Commun.}, vol.~30, no.~3, pp.
  575--585, April 2012.

\bibitem{deng2014heterogeneous}
N.~Deng, W.~Zhou, and M.~Haenggi, ``A heterogeneous cellular network model with
  inter-tier dependence,'' in \emph{Proc., IEEE Globecom Workshops}, Dec 2014.

\bibitem{dengheterogeneousax}
------, ``Heterogeneous cellular network models with dependence,'' \emph{IEEE
  Journal on Sel. Areas in Commun.}, vol.~33, no.~10, pp. 2167--2181, Oct 2015.

\bibitem{ganti2006regularity}
R.~K. Ganti and M.~Haenggi, ``Regularity in sensor networks,'' in \emph{Proc.,
  International Zurich Seminar on Communications}.\hskip 1em plus 0.5em minus
  0.4em\relax IEEE, 2006, pp. 186--189.

\bibitem{Hybrid_ICC}
M.~Afshang, Z.~Yazdanshenasan, S.~Mukherjee, and P.~H.~J. Chong, ``Hybrid
  division duplex for {HetNets}: Coordinated interference management with
  uplink power control,'' in \emph{Proc., IEEE International Conference on
  Communications Workshops (ICC)}, 2015.

\bibitem{lee2012interference}
C.-h. Lee and M.~Haenggi, ``Interference and outage in {Poisson} cognitive
  networks,'' \emph{IEEE Trans. on Wireless Commun.}, vol.~11, no.~4, pp.
  1392--1401, April 2012.

\bibitem{yazdanshenasan2016poisson}
Z.~Yazdanshenasan, H.~S. Dhillon, M.~Afshang, and P.~H.~J. Chong, ``Poisson
  {Hole} {Process}: {Theory} and applications to wireless networks,''
  \emph{Available online: http://arxiv.org/abs/1601.01090}, 2016.

\bibitem{haenggi2005distances}
M.~Haenggi, ``On distances in uniformly random networks,'' \emph{IEEE Trans. on
  Info. Theory}, vol.~51, no.~10, pp. 3584--3586, Oct 2005.

\bibitem{YazDhiC2016}
Z.~Yazdanshenasan, H.~S. Dhillon, M.~Afshang, and P.~H.~J. Chong, ``Tight
  bounds on the {Laplace} transform of interference in a {Poisson} hole
  process,'' {\em IEEE ICC}, Kuala Lumpur, Malaysia, May 2016.

\bibitem{chiu2013stochastic}
S.~N. Chiu, D.~Stoyan, W.~S. Kendall, and J.~Mecke, \emph{Stochastic {G}eometry
  and its {A}pplications}.\hskip 1em plus 0.5em minus 0.4em\relax John Wiley \&
  Sons, 2013.

\end{thebibliography}
           
\end{document}